\newcommand{\Ra}{\Rightarrow}
\newcommand{\eps}{\lambda}
\begin{document}
\title{Descriptional Complexity of Three-Nonterminal Scattered Context Grammars: An Improvement}
\def\titlerunning{Three-Nonterminal Scattered Context Grammars: An Improvement}
\author{Tom{\' a}{\v s} Masopust \qquad Alexander Meduna
\institute{Faculty of Information Technology -- Brno University of Technology\\
  Bo\v{z}et\v{e}chova 2 -- Brno 61266 -- Czech Republic}
\email{\{masopust,meduna\}@fit.vutbr.cz}
}
\def\authorrunning{T.~Masopust, A.~Meduna}

\maketitle
  \begin{abstract}
    Recently, it has been shown that every recursively enumerable language can be generated by a scattered context grammar with no more than three nonterminals. However, in that construction, the maximal number of nonterminals simultaneously rewritten during a derivation step depends on many factors, such as the cardinality of the alphabet of the generated language and the structure of the generated language itself. This paper improves the result by showing that the maximal number of nonterminals simultaneously rewritten during any derivation step can be limited by a small constant regardless of other factors.
  \end{abstract}

\section{Introduction}
  Scattered context grammars, introduced by Greibach and Hopcroft in~\cite{GreHop}, are partially parallel rewriting devices based on context-free productions, where in each derivation step, a finite number of nonterminal symbols of the current sentential form is simultaneously rewritten. As scattered context grammars were originally defined without erasing productions, it is no surprise that they generate only context sensitive languages. On the other hand, however, the question of whether every context sensitive language can be generated by a (nonerasing) scattered context grammar is an interesting, longstanding open problem. Note that the natural generalization of these grammars allowing erasing productions makes them computationally complete (see \cite{meduna:eatcs}). For some conditions when a scattered context grammar can be transformed to an equivalent nonerasing scattered context grammar, the reader is referred to \cite{techetAI}. In what follows, we implicitly consider scattered context grammars with erasing productions.

  Although many interesting results have been achieved in the area of the descriptional complexity of scattered context grammars during the last few decades, the main motivation to re-open this investigation area comes from an interesting, recently started research project on bulding parsers and compilers of programming languages making use of advantages of scattered context grammars (see, for instance, papers \cite{kolar,rychnov} for more information on the advantages and problems arising from this approach).

  To give an insight into the descriptional complexity of scattered context grammars (including erasing productions), note that it is proved in \cite{meduna2} that one\mbox{-}nonterminal scattered context grammars are not powerful enough to generate all context sensitive languages so that it is demonsrated that they are not able to generate the language $\{a^{2^{2^n}} : n\ge 0\}$ (which is scattered context, see Lemma \ref{lem1} below). In addition, although they are not able to generate all these languages, it is an open problem (because of the erasing productions) whether they can generate a language which is not context sensitive. On the other hand, it is proved in \cite{Meduna00b} that three nonterminals are sufficient enough for scattered context grammars to characterize the family of recursively enumerable languages. In that proof, however, the maximal number of nonterminal symbols simultaneously rewritten during any derivation step depends on the alphabet of the generated language and on the structure of the generated language itself.

  Later, in \cite{vaszil}, Vaszil gave another construction limiting the maximal number of nonterminals simultaneously rewritten during one derivation step. However, this improvement is for the price of increasing the number of nonterminals. Although Vaszil's construction has been improved since then (in the sense of the number of nonterminals, see \cite{masopustTCS} for an overview of the latest results), the number of three nonterminals has not been achieved.

  This paper presents a construction improving the descriptional complexity of scattered context grammars with three nonterminals by limiting the maximal number of nonterminals simultaneously rewritten during any derivation step regardless of any other factors. This result is achieved by the combination of approaches of both previously mentioned papers. Specifically, this paper proves that every recursively enumerable language is generated by a three-nonterminal scattered context grammar, where no more than nine symbols are simultaneously rewritten during any derivation step. This is a significant improvement in comparison with the result of \cite{Meduna00b}, where more than $2n+4$ symbols have to be simultaneously rewritten during almost all derivation steps of any successful derivation, for some $n$ strictly greater than the number of terminal symbols of the generated language plus two. To be more precise, $n$ strongly depends not only on the terminal alphabet of the generated language, but also on the structure of the generated language itself.

  Finally, note that analogously as in \cite{Meduna00b}, we do not give a constant limit on the number of non-context-free productions, which is also limited by fixed constants in \cite{vaszil} and \cite{masopustTCS}. To find such a limit is an interesting challenge for the future research, as well as to find out whether the number of nonterminals can be reduced to two. See also the overview of known results and open problems in the conclusion.

\section{Preliminaries and definitions}
  We assume that the reader is familiar with formal language theory (see \cite{salomaa}).
  For an alphabet (finite nonempty set) $V$, $V^*$ represents the free monoid generated 
  by~$V$ with the unit denoted by $\eps$. Set $V^+ = V^*-\{\eps\}$. For $w \in V^*$ and $a\in V$, let $|w|$, $|w|_a$, and $w^R$ denote the length of $w$, the number of occurrences of $a$ in $w$, and the mirror image of $w$, respectively.

  A {\em scattered context grammar\/} is a quadruple $G=(N,T,P,S)$, where $N$ is the alphabet of nonterminals, $T$ is the alphabet of terminals such that $N\cap T=\emptyset$, $S\in N$ is the start symbol, and $P$ is a finite set of productions of the form $(A_1,A_2,\dots,A_n)\to (x_1,x_2,\dots,x_n)$, for some $n\geq 1$, where $A_i\in N$ and $x_i \in (N\cup T)^*$, for all $i=1,\dots,n$. If $n\ge 2$, then the production is said to be {\em non\mbox{-}context-free}; otherwise, it is context-free. In addition, if for each $i=1,\dots,n$, $x_i\neq\eps$, then the production is said to be {\em nonerasing}; $G$ is {\em nonerasing\/} if all its productions are nonerasing.

  For $u,v\in (N\cup T)^*$, $u\Ra v$ in $G$ provided that
  \begin{itemize}
    \item $u=u_1A_1u_2A_2u_3\dots u_nA_nu_{n+1}$,
    \item $v=u_1x_1u_2x_2u_3\dots u_nx_nu_{n+1}$, and
    \item $(A_1,A_2,\dots,A_n)\to (x_1,x_2,\dots,x_n)\in P$,
  \end{itemize}
  where $u_i \in (N\cup T)^*$, for all $i=1,\dots,n+1$. The language generated by $G$ is 
  defined as 
  $$L(G)=\{w \in T^* : S \Ra^* w\},$$ 
  where $\Ra^*$ denotes the reflexive and 
  transitive closure of the relation $\Ra$. A language $L$ is said to be 
  a (nonerasing) {\em scattered context language} if there is a (nonerasing) scattered 
  context grammar~$G$ such that $L=L(G)$.

\section{Main results}
  First, we give a simple example of a nonerasing scattered context grammar generating a non\mbox{-}context-free language. Then, we present a nonerasing scattered context grammar generating the nontrivial context sensitive language $\{a^{l^{k^n}} : n\ge 0\}$, for any $k,l\ge 2$. Thus, for $k=l=2$, we have a scattered context grammar generating the language mentioned in the introduction. Note that independently on $k$ and $l$, the grammar has only twelve nonterminals and fourteen productions, ten of which are non-context-free.
  \begin{example}
    Let $G=(\{S,A,B,C\},\{a,b,c\},P,S)$ be a scattered context grammar with $P$ containing the following productions
    \begin{itemize}
      \item $(S)\to(ABC)$
      \item $(A,B,C)\to(aA,bB,cC)$
      \item $(A,B,C)\to(a,b,c)$
    \end{itemize}
    Then, it is not hard to see that the language generated by $G$ is 
    \[L(G)=\{a^nb^nc^n : n\ge 1\}.\tag*{$\diamond$}\]
  \end{example}

  \begin{lemma}\label{lem1}
    For any $k,l\ge 2$, the language $\{a^{l^{k^n}} : n\ge 0\}$ is a nonerasing scattered context language.
  \end{lemma}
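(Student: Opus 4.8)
The idea is to simulate, step by step, the exponential tower $l^{k^n}$ by repeatedly ``raising to the $k$-th power in the exponent.'' We start with $a^{l}$ (the case $n=0$) and, each time we apply a round of productions, we transform a string encoding $a^{l^m}$ into one encoding $a^{l^{m^{\text{-th power made }k}}}$... more precisely, into $a^{(l^m)^k}=a^{l^{mk}}$ — wait, that gives $l^{mk}$, not $l^{k^{n}}$. So the invariant to maintain is: after the $n$-th round we have a block of $l^{k^n}$ copies of $a$, and one round turns a block of size $N$ into a block of size $N^k$, since $(l^{k^n})^k=l^{k^{n+1}}$. Thus the engine we need is a scattered-context gadget that multiplies the length of an $a$-block by raising it to the $k$-th power (for fixed $k\ge 2$), using only nonerasing productions, and that can be entered and exited cleanly so that rounds can be iterated arbitrarily many times and the derivation can terminate only on a genuine power of the form $l^{k^n}$.

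**Key steps.** First I would fix markers: use nonterminals to bracket the current block, e.g. a left marker and a right marker around a string of $a$'s, plus a few auxiliary nonterminals to drive the ``$k$-th power'' computation. Second, I would implement one round of ``$k$-th power'' as follows: to compute $N\mapsto N^k$, it suffices to iterate $k-1$ times the operation ``given two blocks of sizes $N$ and $M$, produce a block of size $NM$'' starting from a fresh copy of size $N$ alongside the stored original; and the product $N\cdot M$ is computed by the classic trick of scanning one block and, for each symbol, appending a full copy of the other block — realized in scattered-context style by moving a ``cursor'' nonterminal through the first block while a paired production simultaneously rewrites a symbol in the second block and deposits a new $a$. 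Since $k$ is a fixed constant, the multiplication-by-$k$-th-power can be hard-wired with constantly many nonterminals and productions. Third, I would add an initialization production $S\to$ (left marker)$a^l$(right marker) and a termination production that erases all nonterminals (leaving only $a$'s) but is applicable only when the block is in a ``clean'' state between rounds, so that $L(G)=\{a^{l^{k^n}}:n\ge 0\}$ exactly. Care must be taken that every production is nonerasing: the termination step that removes markers must replace each marker by at least one $a$ (or be arranged so the count still comes out right), which is easily absorbed by starting from $a^{l}$ with $l\ge 2$ and bookkeeping the constant offsets into the initial block size.

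**Main obstacle.** The delicate part is not the arithmetic but the \emph{control}: ensuring that partially-completed rounds cannot ``leak'' a terminal string, i.e. that no sentential form with leftover nonterminals from a half-finished multiplication can ever reach $T^*$, and conversely that a full round always completes. In scattered context grammars there is no built-in state, so one must encode a finite-state controller into the multiset of nonterminal occurrences and use the simultaneity of non-context-free productions to enforce synchronization (e.g. a rewriting step that is enabled only when a ``done'' marker and a ``ready'' marker coexist). I expect the bulk of the proof to be the careful case analysis showing that the only terminal strings derivable are exactly the $a^{l^{k^n}}$ — verifying soundness (every terminating derivation yields such a power) and completeness (every such power is reachable) — while the existence of the gadgets with the claimed constant resources ($12$ nonterminals, $14$ productions, $10$ non-context-free) is a matter of explicit construction. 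I would therefore present the grammar explicitly, state the loop invariant, and then prove it by induction on the number of completed rounds, handling the initialization and termination productions as base and closing cases.
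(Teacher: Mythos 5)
Your high-level plan (start from $a^l$ and iterate the map $N\mapsto N^k$ on a bracketed block) is arithmetically sound, but as a proof it has a genuine gap: the one gadget that carries all the weight --- multiplying two blocks of \emph{variable} sizes $N$ and $M$ --- is never actually constructed, and the one-sentence realization you give does not compute a product. Moving a cursor through the first block ``while a paired production simultaneously rewrites a symbol in the second block and deposits a new $a$'' advances both blocks in lockstep, so after one sweep you have produced on the order of $\min(N,M)$ new $a$'s; that is copying/addition, not $N\cdot M$. A correct scattered-context multiplication needs two nested loops: an inner pass that deposits one $a$ per symbol of the second block while marking it, and an outer step that advances the first-block cursor and \emph{resets} the markings for the next pass. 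That reset and the associated control (so that a half-finished pass can never be abandoned and still lead to a terminal string, and so that the round can only be exited in a clean state) is exactly the part you explicitly defer (``I expect the bulk of the proof to be the careful case analysis''), and with only nonerasing productions it is where the real difficulty lies. Asserting that the gadgets exist ``with the claimed constant resources'' is not a proof; the content of the lemma \emph{is} the explicit grammar together with an invariant argument.

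It is also worth noting that the paper's construction avoids your main obstacle altogether, and this is a useful lesson: it never multiplies two variable quantities. It works in two stages, each of which only multiplies a unary count by a \emph{fixed} constant via a single cursor sweep (no reset, no nested loop): first a block of $C$'s is repeatedly blown up by the factor $k$ (leaving $B$'s behind) until $k^n$ auxiliary symbols have been produced, and then each of those $k^n$ symbols triggers one multiplication of an $A$-block by the factor $l$, terminal $a$'s being emitted along the way; the small cases $n=0,1,2$ are handled by three separate direct productions $S\to a^l,\ a^{l^k},\ a^{l^{k^2}}$, and termination productions replace every marker by at least one $a$ with constant offsets absorbed into the counts. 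This decomposition ($l^{k^n}$ as ``$k^n$ many multiplications by $l$'' rather than ``$n$ many $k$-th powerings'') is what makes a clean nonerasing grammar with a short inductive invariant possible, and it is the idea your proposal is missing.
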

  \begin{proof}
    Let $G=(\{S,A,A',A'',B,C,X,X_2,X_3,Y,Z,Z'\},\{a\},P,S)$ be a nonerasing scattered context grammar with $P$ containing the following productions:
    \begin{enumerate}
      \item\label{b1} $(S) \to (a^l)$,
      \item\label{b2} $(S) \to \big(a^{l^k}\big)$,
      \item\label{b3} $(S) \to \Big(a^{l^{k^2}}\Big)$,
      \item\label{b4} $(S) \to (A''A^{l-1}X_2B^{k^2-3}A'C^{k^2-1}XY)$,
      \item[*] first stage
      \item\label{b5} $(A',C,X,Y) \to (B^{k-1},A',X,C^{k}Y)$,
      \item\label{b6} $(A',X,Y) \to (B^{k-1},A',C^{k-1}XY)$,
      \item\label{b7} $(A',X,Y) \to (Z,Z,Y)$,
      \item\label{b8} $(Z,C,Z,Y) \to (Z,B^{k-1},Z,Y)$,
      \item\label{b9} $(Z,Z,Y) \to (B,B^{k-1},X_3)$,
      \item[*] second stage
      \item\label{b10} $(A'',A,X_2,X_3) \to (a^{l-1},A'',X_2A^{l},X_3)$,
      \item\label{b11} $(A'',X_2,B,X_3) \to (a^{l-1},A'',A^{l-1}X_2,X_3)$,
      \item\label{b12} $(A'',X_2,X_3) \to (Z',Z',X_3)$,
      \item\label{b13} $(Z',A,Z',X_3) \to (Z',a^{l-1},Z',X_3)$,
      \item\label{b14} $(Z',Z',X_3) \to (a,a^{l-1},a^{l-1})$.
    \end{enumerate}
    Then, all the possible successful derivations of $G$ are summarized in the following (strings in the square brackets are regular expressions describing the productions applied during the derivations).
    \[\begin{array}{rllcl}
      S & \Ra   & a^l                                     & \qquad & [\textrm{(\ref{b1})}]\\
      S & \Ra   & a^{l^k}                                 & \qquad & [\textrm{(\ref{b2})}]\\
      S & \Ra   & a^{l^{k^2}}                             & \qquad & [\textrm{(\ref{b3})}]\\
      S & \Ra   & A''A^{l-1}X_2B^{k^2-3}A'C^{k^2-1}XY     & & [\textrm{(\ref{b4})}]\\
        & \Ra^* & A''A^{l-1}X_2B^{k^n-2}X_3               & & [\textrm{((\ref{b5})$^+$(\ref{b6}))$^*$(\ref{b7})(\ref{b8})$^+$(\ref{b9})}]\\
        & \Ra^* & a^{l^{k^n-1}-l}A''A^{l^{k^n-1}-1}X_2X_3 & & [\textrm{((\ref{b10})$^+$(\ref{b11}))$^*$}]\\
        & \Ra^* & a^{l^{k^n}}                             & & [\textrm{(\ref{b12})(\ref{b13})$^+$(\ref{b14})}]\,.
    \end{array}\]

    The first three cases are clear. In the last case, $l$ symbols $A$ (including~$A''$) are generated in the first derivation step. Then, the derivation can be divided into two parts: in the first part, only productions from the first stage are applied (because there is no $X_3$ in the sentential form) generating $k^n$ auxiliary symbols ($B$'s, $X_2$, and $X_3$). Then, in the second part, only productions from the second stage are applied (because there is no $Y$ in the sentential form) generating $l^{k^n}$ symbols $a$. More precisely, we prove that all sentential forms of a successful derivation containing $X_3$, i.\,e. of the second part, are of the form \[a^{l^{m-1}-l}A''A^{l^{m-1}-1}X_2B^{k^n-m}X_3\,,\] for all $m=2,3,\dots,k^n$ and $n\ge 3$. Clearly, for $m=2$, the sentential form is $A''A^{l-1}X_2B^{k^n-2}X_3$. For $m=k^n$, we have $a^{l^{k^n-1}-l}A''A^{l^{k^n-1}-1}X_2X_3$ and it is not hard to prove that
    \begin{eqnarray*}
      a^{l^{k^n-1}-l}A''A^{l^{k^n-1}-1}X_2X_3 & \Ra^* & a^{l^{k^n-1}-l}aa^{(l-1)(l^{k^n-1}-1)}a^{l-1}a^{l-1} = a^{l^{k^n}}\,.
    \end{eqnarray*} Thus, assume that $2\le m < k^n$. Then,
    \begin{eqnarray*}
      &       & a^{l^{m-1}-l}A''A^{l^{m-1}-1}X_2B^{k^n-m}X_3\\
      & \Ra^* & a^{l^{m-1}-l}a^{(l-1)(l^{m-1}-1)}A''X_2A^{l(l^{m-1}-1)}B^{k^n-m}X_3\quad [\textrm{(\ref{b10})}^*]\\
      &&=      a^{l^m-2l+1}A''X_2A^{l^m-l}B^{k^n-m}X_3\\
      & \Ra   & a^{l^m-2l+1}a^{l-1}A''A^{l^m-l}A^{l-1}X_2B^{k^n-m-1}X_3\quad [\textrm{(\ref{b11})}]\\
      &&=     a^{l^m-l}A''A^{l^m-1}X_2B^{k^n-(m+1)}X_3\,.
    \end{eqnarray*}

    For a complete proof of the correctness of this construction, the reader is referred to \cite{masopust:mono}.
  \end{proof}

  Now, we prove the main result of this paper.

  \begin{theorem}
    Every recursively enumerable language is generated by a scattered context grammar with three nonterminals, where no more than nine nonterminals are simultaneously rewritten during one derivation step.
  \end{theorem}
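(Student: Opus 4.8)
The plan is to simulate, by a three‑nonterminal scattered context grammar $G$, a fixed normal‑form grammar for the given recursively enumerable language $L$. I would start from a Geffert‑type normal form: $L=L(G_0)$ for a grammar $G_0=(N_0,T,P_0,S)$ whose nonterminal set $N_0$ has \emph{constant} size, where every production of $P_0$ is either context‑free or one of a constant number of non‑context‑free erasing productions of arity at most three (a production of the shape $(A,B,C)\to(\eps,\eps,\eps)$), and where — this is the structural feature on which everything rests — every terminating derivation of $G_0$ first builds, using only the context‑free productions, a sentential form of a very rigid shape, so that the erasing productions, although a scattered context grammar is free to apply them to non‑adjacent occurrences, cannot yield a terminal string outside $L$. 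This ``scattered‑robustness'' is exactly what Geffert normal forms are designed to provide, and it is what lets us replace genuine adjacency checking by a static invariant on the shape of the sentential form.

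Next I would fix the three nonterminals of $G$, say $\{S,A,B\}$, and keep the work of $G_0$ inside $G$'s sentential forms in encoded form: each symbol of $N_0$ (and each extra marker the simulation needs) is represented by a fixed‑length block over $\{A,B\}$, with occurrences of $S$ used as hard separators — in particular, one occurrence of $S$ marks the boundary between the already‑finalized terminal prefix (terminals are not subject to the three‑nonterminal restriction, so they are carried literally) and the encoded ``active'' region to its right. This reduction of the nonterminal alphabet is the ingredient taken from \cite{Meduna00b}. The ingredient taken from \cite{vaszil} is that \emph{no} step of $G$ rewrites a whole encoded region at once: each production of $G_0$ is simulated by a bounded‑length sequence of micro‑steps of $G$, each of which advances a constant number of ``pointer'' symbols (built from $A$, $B$ and extra occurrences of $S$) by one block to the right, inspecting only the current pointer, the block it points to, its immediate neighbours, and — when a production of $G_0$ forces two positions to be coordinated — a second pointer together with its neighbourhood. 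Thus a macro‑step that naively would need a production whose arity grows with the code length (hence with $|T|$ and with the structure of $L$, which is precisely the defect of \cite{Meduna00b}) is broken into many micro‑steps of bounded arity.

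With this architecture in place the proof becomes routine but laborious. For each of the finitely many production \emph{types} of $G_0$ one writes the corresponding family of $G$‑productions — initialise a pointer just after the $S$‑separator; walk it rightward, locating and on the way verifying the relevant code block; splice in the encoded right‑hand side, or, for the erasing productions, synchronise with a second pointer and delete the matched blocks; then retract the pointers — and one checks two things. First, that such a micro‑step sequence is \emph{forced}: a partially executed macro‑step cannot be abandoned, nor interleaved with another, in any way that still reaches a terminal string; this again leans on the rigid Geffert shape and on the fact that an occurrence of $S$ cannot be walked over. Second, that every production introduced has at most nine components; a direct inspection of the schemata (one pointer with the block it scans and a guard on each side, a second pointer with its neighbourhood, and one or two separator symbols) gives the bound nine. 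Combining the two inclusions yields $L(G)=L(G_0)=L$, with $G$ having three nonterminals and with at most nine nonterminals rewritten in any step.

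The main obstacle is the faithfulness of the erasing‑step simulation. Because a scattered context grammar may always pair non‑adjacent occurrences, and because only three nonterminals are available, there is almost no room to ``colour'' positions, so one must argue carefully — using both the invariant shape of the simulated sentential forms and the barrier property of $S$‑occurrences — that the walk‑and‑delete gadget cannot erase the wrong encoded blocks and cannot be subverted by an ill‑timed start of a second macro‑step. Making this argument watertight while simultaneously holding every gadget to arity nine is where essentially all the difficulty lies, since the two goals pull against each other (more context per micro‑step makes faithfulness easier but consumes components); the context‑free‑production case and the final arity count are comparatively mechanical.
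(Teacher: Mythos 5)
Your proposal rests on a premise that is exactly the unproved (and, in general, false) heart of the matter: the claimed ``scattered-robustness'' of the Geffert normal form. If the two erasing rules $AB\to\lambda$ and $CD\to\lambda$ are applied to \emph{non-adjacent} occurrences in a sentential form $w_1w_2w$ with $w_1\in\{A,C\}^*$, $w_2\in\{B,D\}^*$, then complete erasure of $w_1w_2$ is possible as soon as $|w_1|_A=|w_2|_B$ and $|w_1|_C=|w_2|_D$ (every $A$ or $C$ precedes every $B$ or $D$, so any matching is available), whereas adjacent cancellation forces the much stronger mirror condition that $w_2$ reversed be the letterwise complement of $w_1$ --- and it is this mirror condition, not a count condition, that certifies the simulated computation. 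For instance $w_1w_2=ACBD$ is fully erasable in scattered mode but not in adjacent mode. So Geffert's normal form is designed for adjacency, not scattered application, and your architecture inherits precisely the overgeneration problem it was supposed to dissolve. The actual proof does not assume any such robustness: it builds a specific grammar over $\{S,A,B\}$ with a homomorphic block encoding ($A\mapsto ABB$, $B\mapsto BBA$, $C,D\mapsto BAB$, $a\mapsto AaBB$), simulates each cancellation by a \emph{single} nine-component production whose left-hand side consists of the three $S$ separators together with one three-symbol block on each side of the middle $S$, and then proves (by counting occurrences of $S$, $A$, $B$ to pin down the initial and final productions, by shape invariants on the sentential forms, and by an ordering argument on the production types) that in any successful derivation the erased blocks are effectively adjacent to the separators, so that the left and right encoded strings satisfy $u=v^R$; only then is the derivation mapped back to a legal adjacent-cancellation derivation of $G'$.

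The second gap is that your walk-and-delete machinery is not constructed, and you yourself locate ``essentially all the difficulty'' in its faithfulness without resolving it. With only the three symbols $S$, $A$, $B$ available, a ``pointer'' cannot be coloured differently from the data it walks over, and a scattered context production cannot force its matched occurrences to be adjacent, so nothing in the schema you sketch prevents a pointer step from skipping blocks, pairing with the wrong occurrence of $S$, or being interleaved with another partially executed macro-step; asserting that the rigid Geffert shape plus $S$-barriers will rule this out is a statement of the goal, not an argument. Note also that the walking is unnecessary for the context-free phase (in Geffert's form the only context-free rewriting happens at the unique central nonterminal, which the paper handles with the arity-3 productions $(S,S,S)\to(S,h(u)Sh(v),S)$), and that the bound nine in the theorem is not the outcome of a pointer-neighbourhood count but simply the size $3+3+3$ of the anchored cancellation productions. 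As it stands, the proposal identifies the right encoding ingredients but leaves both the construction of the critical productions and the entire soundness argument --- the part the paper spends its proof on --- unproved.
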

  \begin{proof}
    Let $L$ be a recursively enumerable language. Then, by Geffert~\cite{Geffert:1988}, 
    there is a grammar\linebreak \hbox{$G'=(\{S',A,B,C,D\},T,P\cup\{AB\to\eps,CD\to\eps\},$ $S')$}, 
    where $P$ contains only context-free productions of the following three forms: 
    $S'\to uS'a$, $S'\to uS'v$, $S'\to \eps$, for $u\in\{A,C\}^*$, $v\in\{B,D\}^*$, 
    and $a\in T$. In addition, Geffert proved that any successful derivation of $G'$ is 
    divided into two parts: the first part is of the form 
    \[S' \Ra^* w_1S'w_2w \Ra w_1w_2w\,,\] 
    generated only by context-free productions from $P$, where $w_1\in\{A,C\}^*$, 
    $w_2\in\{B,D\}^*$, and $w\in T^*$, and the second part is of the form 
    \[w_1w_2w\Ra^* w\,,\] 
    generated only by productions $AB\to\eps$ and $CD\to\eps$.

    Let $G=(\{S,A,B\},T,P,S)$ be a scattered context grammar with $P$ constructed as follows:
    \begin{enumerate}
      \item\label{genS} $(S) \to (SBBASABBSA)$,
      \item\label{a2} $(S,S,S) \to (S,h(u)Sh(a),S)$ \qquad if $S' \to uS'a \in P'$,
      \item\label{a3} $(S,S,S) \to (S,h(u)Sh(v),S)$ \qquad if $S' \to uS'v \in P'$,
      \item\label{a4} $(S,A,B,B,S,B,B,A,S) \to (\eps,\eps,\eps,S,S,S,\eps,\eps,\eps)$,
      \item\label{a5} $(S,B,A,B,S,B,A,B,S) \to (\eps,\eps,\eps,S,S,S,\eps,\eps,\eps)$,
      \item\label{a6} $(S,B,B,A,S,A,B,B,S) \to (\eps,\eps,\eps,SBBA,S,S,\eps,\eps,\eps)$,
      \item\label{a7} $(S,B,B,A,S,A,B,B,S) \to (\eps,\eps,\eps,S,S,S,\eps,\eps,\eps)$,
      \item\label{remS} $(S,S,S,A) \to (\eps,\eps,\eps,\eps)$,
    \end{enumerate}
    where $h$ is a homomorphism from $(\{A,B,C,D\}\cup T)^*$ to $(\{A,B\}\cup T)^*$ defined as $h(A)=ABB$, $h(B)=BBA$, $h(C)=h(D)=BAB$, and $h(a)=AaBB$, for all $a\in T$.

    To prove that $L(G')\subseteq L(G)$, consider a successful derivation of $w\in T^*$ 
    in~$G'$. Such a derivation is of the form described above, where the second part of the derivation is according to a sequence $p_1p_2\dots p_r$ of productions $AB\to\eps$ and $CD\to\eps$, for some $r\ge 0$. Then, in $G$, the derivation of $w$ can be simulated by applications of the corresponding productions constructed above as follows:
    \begin{eqnarray*}
      S & \Ra   & SBBASABBSA \quad [\textrm{(\ref{genS})}]\\
        & \Ra^* & SBBAh(w_1)Sh(w_2w)ABBSA \quad [\textrm{(\ref{a2})$^*$(\ref{a3})$^*$}]\\
        & \Ra^* & Sh(w_1)Sh(w_2)SwA \quad [\textrm{(\ref{a6})$^*$(\ref{a7})}]\\
        & \Ra^* & SSSwA \quad [q_r\dots q_2q_1]\\
        & \Ra   & w \quad [\textrm{(\ref{remS})}]\,,
    \end{eqnarray*}
    where, for each $1\le i\le r$,
    $$q_i=\begin{cases}
    (S,A,B,B,S,B,B,A,S)\to (\eps,\eps,\eps,S,S,S,\eps,\eps,\eps), & \text{ if $p_i=AB\to\eps$}, \\
    (S,B,A,B,S,B,A,B,S)\to (\eps,\eps,\eps,S,S,S,\eps,\eps,\eps), & \text{ otherwise}.
    \end{cases}$$

    On the other hand, to prove that $L(G)\subseteq L(G')$, we demonsrate that $G'$ generates any $x\in L(G)$.

    First, we prove that each of the productions (\ref{genS}) and (\ref{remS}) is applied exactly once in each successful derivation of $G$. To prove this, let $S\Ra^* x$ be a derivation of a string $x\in(\{S,A,B\}\cup T)^*$. Let $i$ be the number of applications of production~(\ref{genS}), $j$ be the number of applications of production (\ref{remS}), and $2k$ be the number of $B$'s in $x$. Then, it is not hard to see that
    \begin{itemize}
      \item $|x|_B = 2k$,
      \item $|x|_A = k + i - j$,
      \item $|x|_S = 1 + 2i - 3j$.
    \end{itemize}
    Thus, for $x\in T^*$, we have that $2k=0$ and $i=j$. In addition, $1+2i-3i=0$ implies that $i=1$, which means that each of the productions (\ref{genS}) and (\ref{remS}) is applied exactly once in each successful derivation of $G$---production (\ref{genS}) as the first production and production (\ref{remS}) as the last production of the derivation. We have shown that every successful derivation of $G$ is of the form \[S\Ra SBBASABBSA \Ra^* w_1Sw_2Sw_3Sw_4A \Ra w_1w_2w_3w_4\,,\] for some terminal strings $w_1,w_2,w_3,w_4\in T^*$.

    Furthermore, there is no production that can change the position of the middle 
    symbol $S$. Therefore, with respect to productions of $G$, we have 
    that
    $w_1, w_2\in \{A,B\}^*$, which along with $w_1,w_2\in T^*$ implies that $w_1=w_2=\eps$. Thus, the previously shown successful derivation is of the form \[S\Ra SBBASABBSA \Ra^* SSw_3Sw_4A \Ra w_3w_4\,.\] Analogously, it can be seen that $w_3\in\{BAB,BBA,AaBB : a\in T\}^*$. Therefore, from the same reason as above, $w_3=\eps$ and every successful derivation of $G$ is of the form
    \begin{eqnarray}
      S \Ra SBBASABBSA \Ra^* SSSwA \Ra w\,,
    \end{eqnarray}
    for some $w\in T^*$.

    Consider any inner sentential form of a successful derivation of $G$. Such a sentential form is a string of the form \[u_1Su_2Su_3Su_4A\,,\] for some $u_i\in (\{A,B\}\cup T)^*$, $1\le i\le 4$. However, it is not hard to see that $u_1=\eps$ and $u_4\in T^*$; otherwise, if there is a nonterminal symbol appearing in the string $u_1u_4$, then, according to the form of productions, none of these symbols can be removed and, therefore, the derivation cannot be successful. Thus, every inner sentential form of any successful derivation of $G$ is of the form
    \begin{eqnarray}\label{sf}
      S\bar{u}S\bar{v}S\bar{w}A\,,
    \end{eqnarray}
    where $\bar{u}\in(BBA+\eps)\{ABB,BAB\}^*$, $\bar{v}\in\{BAB,BBA,AaBB : a\in T\}^*$, and $\bar{w}\in T^*$. Now, we prove that \[\bar{v}\in\{BBA,BAB\}^*\{AaBB : a\in T\}^*(ABB+\eps)\,.\] In other words, we prove that any applications of productions (\ref{a6}) and (\ref{a7}) precede the first application of any of productions (\ref{a4}) and (\ref{a5}).

    Thus, consider the beginning of a successful derivation of the form \[S\Ra SBBASABBSA \Ra^* SBBAuSvABBSwA\,,\] where none of productions (\ref{a6}) and (\ref{a7}) has been applied, and the first application of one of these productions follows. Note that during this derivation, only productions (\ref{genS}) to (\ref{a3}) have been applied because the application of production (\ref{a4}) or (\ref{a5}) skips some nonterminal symbols and, therefore, leads to an incorrect sentential form (see the correct form (\ref{sf}) above). Clearly, $w=\eps\in T^*$ (it is presented here for the reason of induction).

    If production (\ref{a6}) follows, the derivation proceeds
    \begin{eqnarray}
      SBBAuSvABBSwA & \Ra & SBBAuSvSwA\,,
    \end{eqnarray}
    and if production (\ref{a7}) follows, the derivation proceeds
    \begin{eqnarray}
      SBBAuSvABBSwA & \Ra & SuSvSwA\,.
    \end{eqnarray}
    In addition, $w\in T^*$ and, according to the form of productions (\ref{genS}) 
    to (\ref{a3}),
    $u\in\{ABB,BAB\}^*$ and $v\in\{BBA,BAB,AaBB : a\in T\}^*$.

    Now, productions (\ref{a2}) and (\ref{a3}) can be applied. Let
    \begin{eqnarray}\label{5}
      SBBAuSvSwA & \Ra^* & SBBAuu_1Sv_1vSwA \quad [\textrm{((\ref{a2})+(\ref{a3}))}^*]
    \end{eqnarray}
    and
    \begin{eqnarray}\label{6}
      SuSvSwA & \Ra^* & Suu_1Sv_1vSwA \quad [\textrm{((\ref{a2})+(\ref{a3}))}^*]
    \end{eqnarray}
    be the longest parts of the derivation by productions (\ref{a2}) and (\ref{a3}), i.\,e., the application of one of productions (\ref{a4}) to (\ref{remS}) follows.

    ${\bf I.}$ In the first case, derivation $(\ref{5})$, each of productions (\ref{a4}), (\ref{a5}), and (\ref{remS}) leads to an incorrect sentential form. Thus, either production (\ref{a6}) or (\ref{a7}) has to be applied. In both cases, however, $v_1v$ has to be of the form $v'AaBB$, for some $a\in T$, i.\,e.,
    \begin{eqnarray}
      SBBAuu_1Sv'AaBBSwA & \Ra & SBBAu'Sv'SawA \quad [\textrm{(\ref{a6})}]
    \end{eqnarray}
    and the derivation proceeds as in $(\ref{5})$ or
    \begin{eqnarray}
      SBBAuu_1Sv'AaBBSwA & \Ra & Su'Sv'SawA \quad [\textrm{(\ref{a7})}]
    \end{eqnarray}
    and the derivation proceeds as in $(\ref{6})$ because 
    $$u'=uu_1\in\{ABB,BAB\}^* \mbox{ and }v'\in\{BBA,BAB,AaBB : a\in T\}^*.$$ 
    By induction,
    \begin{eqnarray}
      SBBAu'Sv'SawA & \Ra^* & Su''Sv''Sw''awA \quad [\textrm{((\ref{a2})+(\ref{a3})+(\ref{a6}))$^*$(\ref{a7})}]\,,
    \end{eqnarray}
    for some $u''\in\{ABB,BAB\}^*$, $v''\in\{BBA,BAB,AaBB : a\in T\}^*$, and
    $w''aw\in T^*$.

    ${\bf II.}$ In the second case, derivation $(\ref{6})$, each of productions (\ref{a6}) 
    and (\ref{a7}) leads to an incorrect sentential form, and production (\ref{remS}) 
    finishes the derivation, which, as shown above, implies that \hbox{$uu_1=v_1v=\eps$}. Thus, 
    assume that either production (\ref{a4}) or production (\ref{a5}) is applied. Then, in 
    the former case,\linebreak
    $uu_1=ABBu'$ and $v_1v=v'BBA$, and, in the latter case, $uu_1=BABu'$ and $v_1v=v'BAB$, i.\,e.,
    \begin{eqnarray}\label{9}
      SABBu'Sv'BBASwA & \Ra & Su'Sv'SwA \quad [\textrm{(\ref{a4})}]
    \end{eqnarray}
    and the derivation proceeds as in $(\ref{6})$ or
    \begin{eqnarray}\label{10}
      SBABu'Sv'BABSwA & \Ra & Su'Sv'SwA \quad [\textrm{(\ref{a5})}]
    \end{eqnarray}
    and the derivation also proceeds as in $(\ref{6})$ because 
    $$u'\in\{ABB,BAB\}^* \mbox{ and } v'\in\{BBA,BAB,AaBB : a\in T\}^*.$$
    Notice that the application of a production 
    constructed in (\ref{a2}) would lead, in its consequence, to an incorrect sentential 
    form because the derivation would reach one of the following two forms 
    $$SABBxSyAaBBSzA \ \mbox{ or }\ SBABxSyAaBBSzA,$$
    and each of productions (\ref{a4}) and~(\ref{a5}) would move 
    either $A$ in front of the first $S$, or at least one $B$ behind the last~$S$.  
    By induction, it implies that the successful derivation proceeds as
    \begin{eqnarray}
      Su'Sv'SwA & \Ra^* & SSSwA \Ra w \quad [\textrm{((\ref{a3})+(\ref{a4})+(\ref{a5}))$^*$(\ref{remS})}]\,.
    \end{eqnarray}

    Thus, we have proved that the following sequence of productions \[\textrm{((\ref{a4})+(\ref{a5}))((\ref{a2})+(\ref{a3}))$^*$((\ref{a6})+(\ref{a7}))}\] cannot be applied in any successful derivation of $G$. Therefore, all applications of productions (\ref{a6}) 
    and~(\ref{a7}) precede any application of productions (\ref{a4}) and (\ref{a5}), which means that \[\bar{v}\in\{BBA,BAB\}^*\{AaBB : a\in T\}^*(ABB+\eps)\,.\]

    Finally, by skipping all productions (\ref{a4}) and (\ref{a5}) in the considered successful derivation $S\Ra^* w$, we have
    \begin{eqnarray*}
      S & \Ra   & SBBASABBSA \quad [\textrm{(\ref{genS})}]\\
        & \Ra^* & SuSvSwA \quad [\textrm{((\ref{a2})+(\ref{a3})+(\ref{a6}))$^*$(\ref{a7})(\ref{a3})$^*$}]\\
        & \Ra   & uvw \quad [\textrm{(\ref{remS})}]\,,
    \end{eqnarray*}
    where $u\in\{ABB,BAB\}^*$, $v\in\{BBA,BAB\}^*$, $u=v^R$ (see~${\bf II}$), and $w\in T^*$. It is not hard to see that by applications of the corresponding productions constructed in (\ref{a2}) and (\ref{a3}), ignoring productions (\ref{a6}) and (\ref{a7}), and applying $S'\to\eps$ immediately after the last application of productions constructed in~(\ref{a3}), we have that $S'\Ra^* w_1w_2w$ in $G'$, where $w_1\in\{A,C\}^*$ and $w_2\in\{B,D\}^*$ are such that $h(w_1)=u$ and $h(w_2)=v$. As $u=v^R$, we have that $w_1w_2w\Ra^* w$ by productions $AB\to\eps$ and $CD\to\eps$, which completes the proof.
  \end{proof}

\section{Conclusion}
  This section summarizes the results and open problems concerning the descriptional complexity of scattered context grammars known so far.

  {\bf One-nonterminal scattered context grammars:}
  It is proved in \cite{meduna2} that scattered context grammars with only one nonterminal (including erasing productions) are not able to generate all context sensitive languages. However, because of the erasing productions, it is an open problem whether they can generate a language which is not context sensitive.

  {\bf Two-nonterminal scattered context grammars:}
  As far as the authors know, there is no published study concerning the generative power of scattered context grammars with two nonterminals.

  {\bf Three-nonterminal scattered context grammars:}
  In this paper, we have shown that scattered context grammars with three nonterminals, where no more than nine nonterminals are simultaneously rewritten during any derivation step, characterize the family of recursively enumerable languages. However, no other descriptional complexity measures, such as the number of non-context-free productions, are limited in this paper.

  Note that Greibach and Hopcroft \cite{GreHop} have shown that every scattered context grammar can be transformed to an equivalent scattered context grammar where no more than two nonterminals are simultaneously rewritten during any derivation step. This transformation, however, introduces many new nonterminals and, therefore, does not improve our result. Thus, it is an open problem whether the maximal number of nonterminals simultaneously rewritten during any derivation step can be reduced to two in case of scattered context grammars with three nonterminals.

  Finally, it is also an open problem whether the number of non-context-free productions can be limited.

  {\bf Four-nonterminal scattered context grammars:}
  It is proved in \cite{masopustTCS} that every recursively enumerable language can be generated by a scattered context grammar with four nonterminals and three non-context-free productions, where no more than six nonterminals are simultaneously rewritten during any derivation step. In comparison with the result of this paper, that result improves the maximal number of simultaneously rewritten symbols and limits the number of non-context-free productions. On the other hand, however, it requires more nonterminals.

  {\bf Five-nonterminal scattered context grammars:}
  It is proved in \cite{vaszil} that every recursively enumerable language can be generated by a scattered context grammar with five nonterminals and two non-context-free productions, where no more than four nonterminals are simultaneously rewritten during any derivation step. Note that this is the best known bound on the number of non-context-free productions. It is an interesting open problem whether this bound can also be achieved in case of scattered context grammars with three nonterminals.

  {\bf Scattered context grammars with one non-context-free production:}
  In comparison with the previous result, it is a natural question to ask what is the generative power of scattered context grammars with only one non-context-free production. However, as far as the authors know, this is another very interesting open problem.

  {\bf Nonerasing scattered context grammars:}
  So far, we have only considered scattered context grammars with erasing productions. However, the most interesting open problem in this investigation area is the question of what is the generative power of nonerasing scattered context grammars. It is not hard to see that they can generate only context sensitive languages. However, it is not known whether nonerasing scattered context grammars are powerful enough to characterize the family of context sensitive languages.

  Finally, from the descriptional complexity point of view, it is an interesting challenge for the future research to find out whether some results similar to those proved for scattered context grammars with erasing productions can also be achieved in case of nonerasing scattered context grammars.

\paragraph{Acknowledgements}
  Both authors have been supported by the Czech Ministry of Education under the research plan no. MSM~0021630528. The second author has also been supported by the Czech Grant Agency project no. 201/07/0005.

\bibliographystyle{eptcs}
\bibliography{dcfs2009}

\end{document}